\newtheorem{theorem}{Theorem}[section]
\newtheorem{lemma}{Lemma}[section]
\newtheorem{corollary}{Corollary}[section]
\newtheorem{definition}{Definition}[section]
\newcommand{\qed}{\hfill\hbox{\rlap{$\sqcap$}$\sqcup$}}
\newenvironment{proof}{\noindent \emph{Proof.\,}}{\qed}
\def\R{{\mathbb{R}}}
\def\etal{{\it et~al.}\,}
\newcommand{\frechet}{Fr\'echet}
\newcommand{\dfre}{d_{F}}
\newcommand{\myfootnote}[2]{%
	\thispagestyle{fancy}
	\fancyhf{}
	\renewcommand{\headrulewidth}{0pt}
	\lfoot{\footnotesize\emph{#1}}
	\rfoot{\footnotesize\emph{#2}}
}
\begin{document}

\begin{titlepage}  

\title{Voronoi Diagram of Polygonal Chains Under the Discrete Fr\'{e}chet Distance}
\author{
Sergey Bereg
\thanks{Department of Computer Science, University of Texas at Dallas, Richardson, TX 75083, USA. Email: {\tt besp@utdallas.edu}.}
\and
Marina Gavrilova
\thanks{Department of Computer Science, University of Calgary, Calgary, Alberta 
 T2N 1N4, Canada. Email: {\tt marina@cpsc.ucalgary.ca}.}
\and 
Binhai Zhu
\thanks{Department of Computer Science, Montana State University, Bozeman, MT 59717-3880, USA. Email: {\tt bhz@cs.montana.edu}.}
}

\date{}
\maketitle
\myfootnote{}{May 20, 2007}

\begin{abstract}

Polygonal chains are fundamental objects in many applications like
pattern recognition and protein structure alignment. A well-known
measure to characterize the similarity of two polygonal chains is
the famous Fr\'{e}chet distance. In this paper, for the first time,
we consider the Voronoi diagram of polygonal chains in $d$-dimension ($d=2,3$)
under the discrete Fr\'{e}chet distance. Given $n$ polygonal chains ${\cal C}$
in $d$-dimension ($d=2,3$), each with at most $k$ vertices, we prove
fundamental properties of such a Voronoi diagram {\em VD}$_F$(${\cal C}$).
Our main results are summarized as follows.
\begin{itemize}
\item The combinatorial complexity of {\em VD}$_F({\cal C})$ is at least
$\Omega(n^{\lfloor \frac{k+1}{2}\rfloor})$; in fact, even a slice of
{\em VD}$_F({\cal C})$ could contain a $L_\infty$-metric Voronoi diagram in
$k$ dimensions. 
\item The combinatorial complexity of {\em VD}$_F({\cal C})$ is at most
$O(n^{dk+\epsilon})$, for $d=2,3$.
\item Even if each polygonal chain degenerates into a point in three
dimensions (3D), e.g., when we are given a set $P$ of $n$ co-planar points
in 3D, the corresponding three-dimensional Voronoi diagram under the discrete
Fr\'{e}chet distance, {\em VD}$_F(P)$, has a combinatorial complexity of
$\Omega(n^2)$.
\end{itemize}

\end{abstract}

\noindent
{\bf Keywords}: Voronoi diagram, Fr\'{e}chet distance, discrete Fr\'{e}chet
distance, combinatorial complexity, protein structure alignment
\end{titlepage}
\newpage

\section{Introduction}

Fr\'{e}chet distance was first defined by Maurice Fr\'{e}chet in 1906
\cite{Fr06}. While known as a famous distance measure in the field of
mathematics (more specifically, abstract spaces), it was Alt and Godau
who first applied it in measuring the similarity of polygonal curves
in 1992 \cite{AG92}. In general, Fr\'{e}chet distance between 2D
polygonal chains (polylines) can be computed in polynomial time \cite{AG92,AG95}, even
under translation or rotation (though the running time is much higher)
\cite{AKW01,We02}.
While computing (approximating) Fr\'{e}chet distance for surfaces is NP-hard
\cite{Go98}, it is polynomially solvable for restricted surfaces \cite{BBW06}.

In 1994, Eiter and Mannila defined the {\em discrete Fr\'{e}chet distance}
between two polygonal chains $A$ and $B$ and it turns out that this simplified
distance is always realized by two vertices in $A$ and $B$ \cite{EM94}.
They also showed that with dynamic programming the discrete Fr\'{e}chet
distance between them can be computed in $O(|A||B|)$ time. (In fact, the dynamic
programming algorithm works even when the vertices of the chains are in any
fixed dimension.) In \cite{In02}, Indyk defined a similar discrete Fr\'{e}chet
distance in some metric space and showed how to compute approximate nearest
neighbors using that distance.

Recently, Jiang, Xu and Zhu applied the discrete Fr\'{e}chet distance in
aligning the backbones of proteins (which is called the {\em protein
structure-structure alignment} problem) \cite{JXZ07}. In fact, in this
application the discrete Fr\'{e}chet distance makes more sense as the backbone
of a protein is simply a polygonal chain in 3D, with each vertex being the
alpha-carbon atom of a residue. So if the (continuous) Fr\'{e}chet distance
is realized by an alpha-carbon atom and some other point which does not
represent an atom, it is not meaningful biologically. Jiang, {\em et al.}
showed that given two planar polygonal chains the minimum discrete
Fr\'{e}chet distance between them, under both translation and rotation, can be
computed in polynomial time. They also applied some ideas therein to design an
efficient heuristic for the original protein structure-structure alignment
problem in 3D and the empirical results showed that their alignment is more
accurate compared with previously known solutions.

On the other hand, a lot is still unknown regarding discrete/continuous
Fr\'{e}chet distance. For instance, each Fr\'{e}chet distance is a true
distance measure, yet nobody has ever studied the Voronoi diagram
under such an important distance measure. This problem is fundamental, it has
potential applications, e.g., in protein structure alignment, especially with
the ever increasing computational power. Imagine that we have some polylines
$A_1,A_2,A_3...$ in space. If we can construct the Voronoi diagram for
$A_1,A_2,A_3,...$ in space, then given a new polyline $B$ we can easily compute
all the approximate alignment of $B$ with $A_i$'s. The movement of $B$ defines
a subspace (each point in the subspace represents a copy of $B$) and if we
sample this subspace evenly then all we need to do is to locate all these sample
points in the Voronoi diagram for $A_i$'s.

Unfortunately, nothing is known about Voronoi diagram under the
discrete/continuous Fr\'{e}chet distance, even for the simplest
case when each chain degenerates into a point in 3D. In this paper,
we will present the first set of such results by proving some fundamental
properties for both the general case and some special case. We believe that
these results will be essential for us to design efficient algorithms for
computing/approximating Voronoi diagram under the Fr\'{e}chet distance.
In this paper, all of our results are under the discrete Fr\'{e}chet distance unless otherwise specified.

The paper is organized as follows. In Section 2, we introduce some basic
definitions regarding \frechet\ distance and review some known results.
In Section 3, we show the combinatorial upper and lower bounds for
the Voronoi diagram of a set of 2D polylines ${\cal C}$,
{\em VD}$_{F}({\cal C})$. In Section 4, we sketch how to generalize the
results in Section 3 to a set of 3D polylines. In Section 5, we show the
quadratic lower bound for a special case where the $n$ polylines degenerate
into $n$ co-planar points in 3D. In Section 6, we conclude the paper with
several open problems.

\section{Preliminaries}

Given two polygonal chains $A,B$ with $|A|=k$ and $|B|=l$ vertices respectively,
we aim at measuring the similarity of $A$ and $B$ (possibly under translation
and rotation) such that their distance is minimized under certain measure.
Among the various distance measures, the Hausdorff distance is known to be
better suited for matching two point sets than for matching two polygonal
chains; the (continuous) \frechet\ distance is a superior measure for matching
two polygonal chains, but it is not quite easy to compute \cite{AG92}.

Let $X$ be the Euclidean plane $\mathbb{R}^2$; let $d(a,b)$ denote the
Euclidean distance between two points $a,b\in X$.
The (continuous) \frechet\ distance between two parametric curves $f:[0,1]\to X$
and $g:[0,1]\to X$ is 
$$
\delta_\mathcal{F}(f,g) = \inf_{\alpha,\beta} \max_{s\in[0,1]}
d(f(\alpha(s)),g(\beta(s))),
$$
where $\alpha$ and $\beta$ range over all continuous non-decreasing real
functions with $\alpha(0) = \beta(0) = 0$ and $\alpha(1) = \beta(1) = 1$
\footnote{This definition holds in any fixed dimensions. In Section 4, we will
cover the 3D case.}.

Imagine that a person and a dog walk along two different paths while connected
by a leash; they always move forward, though at different paces.
The minimum possible length of the leash is the \frechet\ distance between the
two paths.
To compute the \frechet\ distance between two polygonal curves $A$ and $B$
(in the Euclidean plane) of $|A|$ and $|B|$ vertices, respectively,
Alt and Godau \cite{AG92} presented an $O(|A||B|\log^2(|A||B|))$ time algorithm.
Later this bound was reduced to $O(|A||B|\log(|A||B|))$ time \cite{AG95}.

We now define the discrete \frechet\ distance following \cite{EM94}.

\begin{definition}
Given a polygonal chain (polyline) in the plane $P=$ $<p_1,\dots,p_k>$ of $k$
vertices, a \textbf{$m$-walk} along
$P$ partitions the path into $m$ disjoint non-empty subchains
$\{{\cal P}_i\}_{i=1..m}$ such that ${\cal P}_i=$ $<p_{k_{i-1}+1},\dots,p_{k_i}>$
and $0 = k_0 < k_1 < \dots < k_{m} = k$.

Given two planar polylines $A=$ $<a_1,\dots,a_k>$ and $B=$ $<b_1,\dots,b_l>$,
a \textbf{paired walk} along $A$ and $B$ is
a $m$-walk $\{{\cal A}_i\}_{i=1..m}$ along $A$ and
a $m$-walk $\{{\cal B}_i\}_{i=1..m}$ along $B$ for some $m$, such that,
for $1 \le i \le m$, $|{\cal A}_i| = 1$ or $|{\cal B}_i| = 1$
(that is, ${\cal A}_i$ or ${\cal B}_i$ contains exactly one vertex).
The \textbf{cost} of a paired walk
$W = \{({\cal A}_i,{\cal B}_i)\}$ along two paths $A$ and $B$ is
$$
\dfre^W(A,B) = \max_i \max_{(a,b) \in {\cal A}_i \times {\cal B}_i} d(a,b).
$$

The \textbf{discrete \frechet\ distance} between two polylines $A$ and $B$ is
$$
\dfre(A,B) = \min_W \dfre^W(A,B).
$$
The paired walk that achieves the discrete \frechet\ distance between two paths
$A$ and $B$ is also called the \textbf{\frechet\ alignment} of $A$ and $B$.
\end{definition}

Consider the scenario in which the person walks along $A$ and the dog along $B$.
Intuitively, the definition of the paired walk is based on three cases:
\begin{enumerate}
\item $|{\cal B}_i| > |{\cal A}_i| = 1$:
the person stays and the dog moves forward;
\item $|{\cal A}_i| > |{\cal B}_i| = 1$:
the person moves forward and the dog stays;
\item $|{\cal A}_i| = |{\cal B}_i| = 1$:
both the person and the dog move forward.
\end{enumerate}

\begin{figure}[hbt]
\centerline{\epsffile{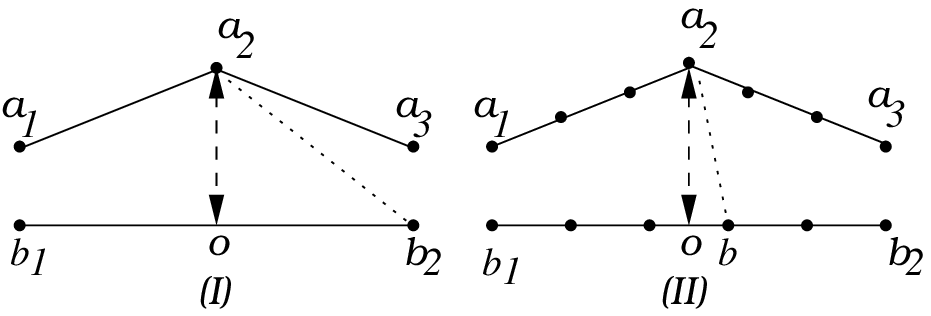}}
\begin{center}{\small {\bf Fig. 1}. The relationship between
discrete and continuous \frechet\ distances.}
\end{center}
\end{figure}

Eiter and Mannila presented a simple dynamic programming algorithm
to compute $\dfre(A,B)$ in $O(|A||B|)=O(kl)$ time \cite{EM94}. Recently,
Jiang \etal\ showed that the minimum discrete \frechet\ distance between $A$
and $B$ under translation can be computed in $O(k^3l^3\log(k+l))$ time, and
under both translation and rotation it can be computed in $O(k^4l^4\log(k+l))$
time \cite{JXZ07}. They are significantly faster than the corresponding bounds
for the continuous \frechet\ distance.
For continuous \frechet\ distance, under translation, the current fastest
algorithm for computing the minimum (continuous) \frechet\ distance between
$A,B$ takes $O((kl)^3(k+l)^2\log(k+l))$ time \cite{AKW01};
under both translation and rotation, the bound is $O((k+l)^{11}\log(k+l))$ time \cite{We02}.

We comment that while the discrete \frechet\ distance could be arbitrarily
larger than the corresponding continuous \frechet\ distance (e.g., in Fig.~1 (I),
they are $d(a_2,b_2)$ and $d(a_2,o)$ respectively), by adding sample points on the
polylines, one can easily obtain a close approximation of the continuous
\frechet\ distance using the discrete \frechet\ distance (e.g., one can
use $d(a_2,b)$ in Fig.~1 (II) to approximate $d(a_2,o)$). This fact was
also pointed out by Indyk in \cite{In02}. Moreover, the discrete \frechet\ distance
is a more natural measure for matching the geometric shapes of biological
sequences such as proteins. As we mentioned in the introduction, in such an
application, continuous \frechet\ does not make much sense to biologists.

In the remaining part of this paper, for the first time, we investigate the
Voronoi diagram of a set of polygonal chains (polylines) in $d$-dimension,
for $d=2,3$. While Voronoi diagram is a central structure in geometric
computing and has been widely
studied \cite{PS85}, it still attracts a lot of attention recently
\cite{AMT06,NBN07} (including a new annual International Symposium on Voronoi Diagrams
in Science and Engineering). We hope that our work will facilitate the
understanding of the continuous and discrete \frechet\ distance and further
enable their applications in various areas, like pattern recognition and
computational biology.

\section{The Combinatorial Complexity of {\em VD}$_{F}({\cal C})$ in 2D}

Let $A_k=$ $<a_1,a_2,\dots,a_k>$ and $B_l=$ $<b_1,b_2,\dots,b_l>$ be two 
polygonal chains in the plane where $a_i=(x(a_i),y(a_i)), b_j=(x(b_j),y(b_j))$
and $k,l\ge 1$. We first have the following lemma, which is easy to prove. 

\begin{lemma} \label{l1}
Let $A_2=$ $<a_1,a_2>$ and $B_2=$ $<b_1,b_2>$ be two line segments in the plane, then
$$\dfre(A_2,B_2)=\max(d(a_1,b_1),d(a_2,b_2)).$$
\end{lemma}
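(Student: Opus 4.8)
The plan is to prove the two inequalities separately, working directly from the definition of $\dfre$ as a minimum of the cost $\dfre^W$ over all paired walks $W$. For the lower bound I will show that \emph{every} admissible paired walk already pays both $d(a_1,b_1)$ and $d(a_2,b_2)$; for the upper bound I will exhibit a single walk whose cost is exactly $\max(d(a_1,b_1),d(a_2,b_2))$. Since both chains have only two vertices, the analysis reduces to inspecting the endpoints of the subchains, so no real calculation is involved.

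For the lower bound, let $W=\{(\mathcal{A}_i,\mathcal{B}_i)\}_{i=1}^m$ be an arbitrary paired walk along $A_2$ and $B_2$. By the boundary conditions $k_0=0$ and $k_m=k$ in the definition of an $m$-walk, the first subchain $\mathcal{A}_1$ necessarily begins with $a_1$ and $\mathcal{B}_1$ begins with $b_1$; hence $(a_1,b_1)\in\mathcal{A}_1\times\mathcal{B}_1$ and $d(a_1,b_1)\le\dfre^W(A_2,B_2)$. Symmetrically, the last subchains $\mathcal{A}_m$ and $\mathcal{B}_m$ must end with $a_2$ and $b_2$, so $(a_2,b_2)\in\mathcal{A}_m\times\mathcal{B}_m$ and $d(a_2,b_2)\le\dfre^W(A_2,B_2)$. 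Taking the maximum gives $\dfre^W(A_2,B_2)\ge\max(d(a_1,b_1),d(a_2,b_2))$ for every $W$, and therefore $\dfre(A_2,B_2)\ge\max(d(a_1,b_1),d(a_2,b_2))$.

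For the upper bound I will test the specific walk $W^*$ with $m=2$ given by $\mathcal{A}_1=\langle a_1\rangle$, $\mathcal{B}_1=\langle b_1\rangle$, $\mathcal{A}_2=\langle a_2\rangle$, $\mathcal{B}_2=\langle b_2\rangle$. Every subchain is a singleton, so the admissibility requirement that $|\mathcal{A}_i|=1$ or $|\mathcal{B}_i|=1$ holds trivially, and $W^*$ is a legal paired walk with $\dfre^{W^*}(A_2,B_2)=\max(d(a_1,b_1),d(a_2,b_2))$. This yields $\dfre(A_2,B_2)\le\max(d(a_1,b_1),d(a_2,b_2))$, and combining the two bounds gives the claimed equality.

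There is essentially no obstacle here, which is why the statement is labelled easy; the only point that deserves care is that the lower bound must be argued for \emph{all} paired walks, not just for $W^*$, since a priori a cleverer pairing (for instance one letting the dog traverse both of its vertices while the person waits) might try to avoid one of the two endpoint distances. The boundary conditions of the $m$-walk rule this out by forcing the pairs $(a_1,b_1)$ and $(a_2,b_2)$ into the first and last groups of any walk. For completeness I will also note that with $|A_2|=|B_2|=2$ the only admissible value is $m=2$, as the choice $m=1$ would require a single group with $|\mathcal{A}_1|=|\mathcal{B}_1|=2$, which is forbidden; thus $W^*$ is in fact the unique paired walk and the equality is immediate.
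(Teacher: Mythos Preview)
Your proof is correct. The paper does not actually prove this lemma; it merely states that it ``is easy to prove'' and moves on, so there is nothing to compare against. Your two-inequality argument is clean, and your closing observation that $W^*$ is in fact the \emph{only} admissible paired walk (since $m\le\min(|A_2|,|B_2|)=2$ and $m=1$ is ruled out by the singleton condition) already gives the equality in one stroke, making the separate lower-bound paragraph redundant but certainly not wrong.
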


For general polylines, we can generalize the above lemma as follows.

\begin{lemma} \label{FFD}
Let $A_k=$ $<a_1,a_2,\dots,a_k>$ and $B_l=$ $<b_1,b_2,\dots,b_l>$ be two
polygonal chains in the plane where $k,l\ge 1$. The discrete
Fr\'{e}chet distance between $A_k$ and $B_l$ can be computed as
\begin{equation} \label{FD}
\dfre(A_k,B_l)= \begin{cases}
\max \{d(a_i,b_1),i=1,2,\dots,k\}& \text{if $l=1$,}\\
\max \{d(a_1,b_j), j=1,2,\dots,l\}& \text{if $k=1$,}\\
\max \{d(a_k,b_l),\min(\dfre(A_{k-1},B_{l-1}), \dfre(A_{k},B_{l-1}),\dfre(A_{k-1},B_{l})) \}& \text{if $k,l>1$.}
\end{cases}
\end{equation}
\end{lemma}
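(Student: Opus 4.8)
The plan is to prove the recurrence by analyzing the structure of an optimal paired walk and using the defining property that in each step of a paired walk, either $\mathcal{A}_i$ or $\mathcal{B}_i$ is a single vertex. The base cases ($k=1$ or $l=1$) are immediate: if $l=1$, the dog cannot move, so the person must traverse all of $A_k$ while paired against $b_1$, and the cost is the maximum over all $d(a_i,b_1)$; this follows directly from Definition~1 (and is essentially Lemma~\ref{l1} extended to a single-vertex chain). The symmetric argument handles $k=1$.

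For the main case $k,l>1$, I would focus on the \emph{last} paired step of any walk $W$ along $A_k$ and $B_l$. Since the walks must end at $a_k$ and $b_l$ respectively, the pair $(a_k,b_l)$ must appear in the final step, contributing the term $d(a_k,b_l)$ to the cost via $\dfre^W(A_k,B_l)=\max_i\max_{(a,b)\in\mathcal{A}_i\times\mathcal{B}_i}d(a,b)$. The key observation is that the constraint $|\mathcal{A}_i|=1$ or $|\mathcal{B}_i|=1$ forces exactly three possibilities for how the walk reaches the terminal configuration $(a_k,b_l)$: the previous vertex pairing is $(a_{k-1},b_{l-1})$ (both advance), $(a_k,b_{l-1})$ (only the dog advances, so $a_k$ is repeated), or $(a_{k-1},b_l)$ (only the person advances). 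These correspond precisely to the three subproblems $\dfre(A_{k-1},B_{l-1})$, $\dfre(A_k,B_{l-1})$, and $\dfre(A_{k-1},B_l)$.

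The argument then splits into the two standard directions. For the upper bound, I would take an optimal walk for whichever of the three subproblems attains the minimum, then extend it by one step that pairs $a_k$ with $b_l$; the cost of the extended walk is $\max\{d(a_k,b_l),(\text{cost of the subproblem walk})\}$, which shows $\dfre(A_k,B_l)$ is at most the right-hand side. For the lower bound, I would take an optimal walk $W$ for the full problem and delete its last step; what remains is a valid paired walk for one of the three subproblems (depending on which of the three terminal transitions $W$ used), and its cost is at most $\dfre^W(A_k,B_l)$. This shows the minimum of the three subproblem distances is at most $\dfre(A_k,B_l)$, and combining with the unavoidable $d(a_k,b_l)$ term gives equality.

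The step requiring the most care is verifying that deleting the last step of an optimal walk indeed yields a \emph{legal} paired walk for the appropriate subproblem, and that the three transition cases are genuinely exhaustive and do not overlap in a way that breaks the recurrence. In particular, one must confirm that after removing the terminal pairing the remaining walk still partitions $A_{k-1}$ (or $A_k$) and the corresponding prefix of $B$ into the required matching $m'$-walks with the single-vertex condition preserved at every step. This bookkeeping is routine but is where an incorrect case analysis would slip in, so I would state the three cases explicitly and check that each restricted walk satisfies Definition~1. Once that is established, the recurrence follows directly from the $\max$-of-$\min$ structure of the paired-walk cost.
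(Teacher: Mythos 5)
Your proposal is correct and follows essentially the same route as the paper's proof: both handle the $k=1$ and $l=1$ base cases directly from the definition, then analyze the last step of the walk (both advance, only the person advances, only the dog advances) to obtain the three subproblems and the $\max$-of-$\min$ recurrence. Your write-up is in fact somewhat more careful than the paper's, which simply asserts the three equations without the explicit upper/lower-bound verification.
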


\begin{proof}
As can be seen from Section 2, the discrete Fr\'{e}chet distance can be computed
using dynamic programming. We assume that the man walks along $A_k$ and the
dog walks along $B_l$. If $l=1$ then the dog stays at $b_1$ and
$\dfre(A_k,B_l)=\max \{d(a_i,b_1),i=1,2,\dots,k\}$. If $k=1$ then the man
stays at $a_1$ and 
$\dfre(A_k,B_l)=\max \{d(a_1,b_j), j=1,2,\dots,l\}$. 

Suppose that both $k$ and $l$ are greater than 1. If the man and dog
both move in the last step then 
\begin{equation} \label{man-dog}
\dfre(A_k,B_l)=\max(d(a_k,b_l),\dfre(A_{k-1},B_{l-1})).
\end{equation}
If only the man moves in the last step then 
\begin{equation} \label{man-only}
\dfre(A_k,B_l)=\max(d(a_k,b_l),\dfre(A_{k-1},B_{l})).
\end{equation}
If only the dog moves in the last step then 
\begin{equation} \label{dog-only}
\dfre(A_k,B_l)=\max(d(a_k,b_l),\dfre(A_{k},B_{l-1})).
\end{equation}
The smallest value of (\ref{man-dog}),(\ref{man-only}) and
(\ref{dog-only}) is 
$$\max \{d(a_k,b_l),\min(\dfre(A_{k-1},B_{l-1}),
\dfre(A_{k},B_{l-1}),\dfre(A_{k-1},B_{l})) \}.$$
The lemma follows.
\end{proof}

Based on the above lemma, we try to investigate the combinatorial
complexity of {\em VD}$_{F}({\cal C})$, the Voronoi diagram of
a set ${\cal C}$ of $n$ planar polylines each with at most $k$ vertices.
Following \cite{ES86,Sh94}, a Voronoi diagram is a minimization of
distance functions to the sites (in this case the polylines in ${\cal C}$).
We first quickly review a result on the upper bound of lower envelopes in high
dimensions by Sharir \cite{Sh94}.

Let $\Sigma=\{\sigma_1,\dots,\sigma_n\}$ be a collection of $n$
$(d-1)$-dimensional algebraic surface patches in $d$-space. 
Let ${\cal A}(\Sigma)$ be the arrangement of $\Sigma$. The result
in \cite{Sh94} holds upon the following three conditions.

(i) Each $\sigma_i$ is monotone in the $x_1x_2\dots x_{d-1}$-direction
  (i.e. any line parallel to $x_d$-axis intersects $\sigma_i$ in at most
  one point). Moreover, each $\sigma_i$ is a portion of a
  $(d-1)$-dimensional algebraic surface of constant maximum degree
  $b$. 

(ii) The projection of $\sigma_i$ in $x_d$-direction onto the
  hyperplane $x_d=0$ is a semi-algebraic set defined in terms of a
  constant number of $(d-1)$-variate polynomials of constant maximum
  degree. 

(iii) The surface patches $\sigma_i$ are in {\em general position}
  meaning that the coefficients of the polynomials defining surfaces
  and their boundaries are algebraically independent over rationals. 

\begin{theorem}\cite{Sh94} \label{SH}
Let $\Sigma$ be a collection of $n$ $(d-1)$-dimensional algebraic
surface patches in $d$-space, which satisfy the above conditions
(i),(ii), and (iii). Then the number of vertices of ${\cal A}(\Sigma)$
that lie at the lower envelope (i.e., level one) is
$O(n^{d-1+\epsilon})$, for any $\epsilon>0$. 
\end{theorem}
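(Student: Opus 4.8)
The plan is to bound the number of level-one vertices directly, by induction on the dimension $d$. Conditions (i) and (iii) guarantee that each such vertex is a transversal intersection of exactly $d$ of the patches with no other patch strictly below it, so the quantity to be estimated is simply the number of $d$-tuples of surfaces whose common intersection point survives on the lower envelope; I will bound this by $O(n^{d-1+\epsilon})$.

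The base case is $d=2$, where each $\sigma_i$ is a univariate arc and conditions (i)--(ii) force any two arcs to cross in at most a constant number $b$ of points. The lower envelope is then described by a Davenport--Schinzel sequence of order $b+2$ over $n$ symbols, whose maximal length is $\lambda_{b+2}(n)=n\cdot 2^{O(\alpha(n)^{c})}=O(n^{1+\epsilon})$. This Davenport--Schinzel estimate is the foundation on which the entire induction rests.

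For the inductive step I would fix a patch $\sigma_i$ and analyse the trace of the global lower envelope on it. A point of $\sigma_i$ survives on the envelope precisely when $\sigma_i$ lies weakly below every other patch over the corresponding projection; hence the level-one vertices incident to $\sigma_i$ are captured by a minimization structure induced inside the $(d-1)$-dimensional parameter space of $\sigma_i$ by the $n-1$ intersections $\sigma_i\cap\sigma_j$. The inductive hypothesis bounds the number of vertices of this $(d-1)$-dimensional sub-structure by $O(n^{d-2+\epsilon})$, and since every inner vertex lies on $d$ patches it is counted only a bounded number of times; summing over the $n$ choices of $i$ then gives the desired $O(n^{d-1+\epsilon})$. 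Vertices involving the relative boundaries $\partial\sigma_i$, which by condition (ii) are semi-algebraic of bounded description, are treated by the same induction one dimension lower and contribute only a lower-order term.

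The hard part is that this trace argument, taken literally, under-determines the bound: the region of the parameter space where a fixed $\sigma_i$ is lowest is an intersection of $n-1$ regions, and controlling the complexity of a single such cell does not follow directly from $\psi_{d-1}(n)$ but genuinely requires the Clarkson--Shor random-sampling technique together with a careful charging scheme. This is the technical heart of the proof, and it is also where the extra $\epsilon$ in the exponent originates: each of the $d$ levels of the recursion loses a small polynomial factor (equivalently, the inverse-Ackermann slack of the base case compounds through the recursion), so one cannot hope for a clean $O(n^{d-1})$ bound and instead absorbs the accumulated loss into the factor $n^{\epsilon}$. General position (condition (iii)) is used throughout to ensure that all the relevant intersections are transversal and of the expected dimension, so that the inductive counting is never corrupted by degeneracies.
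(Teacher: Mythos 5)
First, be aware that the paper offers no proof of this statement at all: Theorem \ref{SH} is quoted verbatim from Sharir \cite{Sh94} and used as a black box, so there is no internal argument to compare yours against; what you have written must stand or fall as a reconstruction of Sharir's proof.

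As such a reconstruction it names the right ingredients but does not close. The base case is fine: for $d=2$ the lower envelope of $n$ arcs, each pair crossing in at most $b$ points, has complexity $\lambda_{b+2}(n)=O(n^{1+\epsilon})$ by Davenport--Schinzel theory. The inductive step, however, fails as stated. The portion of $\sigma_i$ visible on the global lower envelope is the single cell, in the arrangement induced on $\sigma_i$ by the $n-1$ sets $\sigma_i\cap\sigma_j$, over which $\sigma_i$ is weakly lowest; bounding the complexity of one cell in an arrangement of $n$ surfaces in $(d-1)$-space is \emph{not} an instance of the lower-envelope problem in $(d-1)$-space, so the inductive hypothesis does not apply to it. You acknowledge this yourself and defer to ``the Clarkson--Shor random-sampling technique together with a careful charging scheme,'' but that deferred step is the entire theorem: Sharir's actual argument charges each inner vertex of the envelope to points of low level along the one-dimensional curves obtained by intersecting $d-1$ of the surfaces, controls those by Davenport--Schinzel bounds along each such curve, and combines this with a sampling recurrence of the shape $\psi(n)\le O(k^{d-1})\,\psi(n/k)+(\text{lower-order terms})$; the factor $n^{\epsilon}$ arises from solving this recurrence with $k$ a large constant (giving exponent $d-1+\log_k O(1)$), not from inverse-Ackermann slack compounding through $d$ levels of dimensional induction as you suggest. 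A writeup that explicitly labels its missing step ``the technical heart of the proof'' is a correct table of contents for Sharir's paper, not a proof of the theorem.
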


We now try to show a general upper bound on the combinatorial complexity
of {\em VD}$_{F}({\cal C})$.

{\em Polyline-point correspondence}. 
Let $c_{1},c_{2},\dots,c_{k}$ be the sequence of vertices of
a polygonal chain $C$ and let $(x(c_{i}),y(c_{i}))$ be the
coordinates of vertex $c_{i},i=1,\dots,k$. 
Using this notation we view every polygonal chain $C$ with $k$
vertices in the plane as a point in $\R^{2k}$ and vice versa.

\begin{lemma} \label{surface}
Let $B\in\R^{2l}$ be a polygonal chain of $l$ vertices
$b_{1},\dots,b_{l}$ in the plane where
$b_{i}=(x(b_{i}),y(b_{i}))$. 
Let $f:\R^{2k}\to\R$ be the distance function defined as
$$f(C)=\dfre(C,B),$$
where $C=(c_1,\dots,c_k)\in\R^{2k}$ is a polygonal chain with $k$
vertices and $c_i=(x(c_i),y(c_i)),i=1,\dots,k$. 
The space $\R^{2k}$ can be partitioned into at most $(kl)!$
semi-algebraic sets $R_1,R_2,R_3,\dots$ such that the function 
$f(C)$ with domain restricted to any $R_i$ is algebraic.
Thus, the function $f(C)$ satisfies the above condition (i) and (ii). 
\end{lemma}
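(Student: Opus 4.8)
The plan is to…

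The plan is to exploit the recursive formula of Lemma~\ref{FFD}, which expresses $\dfre(C,B)$ as an alternation of $\max$ and $\min$ operations applied to the pairwise vertex distances $d(c_i,b_j)$. First I would unroll this recursion into the closed-form min-max characterization
$$
\dfre(C,B)=\min_{\pi}\ \max_{(i,j)\in\pi} d(c_i,b_j),
$$
where $\pi$ ranges over all valid paired walks, equivalently over all monotone staircase paths from $(1,1)$ to $(k,l)$ in the $k\times l$ coupling grid whose steps advance $i$, advance $j$, or advance both. Each of the $kl$ functions $d_{ij}(C):=d(c_i,b_j)=\sqrt{(x(c_i)-x(b_j))^2+(y(c_i)-y(b_j))^2}$ (with the $b_j$ fixed) is algebraic in $C$, so $f$ is a min-max of finitely many algebraic functions.

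Second, I would construct the partition from the relative order of these $kl$ distances. For each total order $s$ on the index set $\{(i,j)\}$, let $R_s\subseteq\R^{2k}$ be the set of chains $C$ for which the values $d_{11}(C),\dots,d_{kl}(C)$ occur in the order $s$. Each $R_s$ is cut out by inequalities of the form $d_{ij}(C)\le d_{i'j'}(C)$, which after squaring become polynomial inequalities of degree two; hence $R_s$ is semi-algebraic, and there are at most $(kl)!$ such regions. The key claim is that $f$ is a single algebraic function on each $R_s$: once the order is fixed, $\max_{(i,j)\in\pi}d_{ij}(C)$ equals $d_{i_\pi j_\pi}(C)$ for the pair $(i_\pi,j_\pi)\in\pi$ of largest rank, a pair depending only on $\pi$ and $s$; taking the minimum over $\pi$ then selects a single pair $(i^*,j^*)$, again depending only on $s$. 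Thus $f|_{R_s}=d_{i^*j^*}$, which is algebraic.

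Finally, I would read off conditions (i) and (ii) for each surface patch, namely the graph $\{(C,t):C\in R_s,\ t=d_{i^*j^*}(C)\}$ in $\R^{2k+1}$. Monotonicity in the $x_1\cdots x_{2k}$-direction is automatic because the patch is the graph of a function, so a line parallel to the last axis meets it in at most one point; and the patch lies on the algebraic surface $t^2=(x(c_{i^*})-x(b_{j^*}))^2+(y(c_{i^*})-y(b_{j^*}))^2$ of constant degree two, giving (i). For (ii), the projection onto the hyperplane $x_{2k+1}=0$ is exactly $R_s$, a semi-algebraic set defined by a constant number ($\binom{kl}{2}$) of degree-two polynomials, so (ii) holds as well.

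The main obstacle is the middle step: verifying cleanly that within one ordering region the nested $\min$/$\max$ collapses to a single fixed distance function. This is the delicate point because the $\min$ ranges over exponentially many paired walks and could a priori switch which walk is optimal as $C$ varies; the resolution is that fixing the relative order of all $kl$ values simultaneously freezes the inner maximizer of every individual walk and the outer minimizing walk, so the selection $(i^*,j^*)$ is determined by the order type $s$ alone and is constant throughout $R_s$.
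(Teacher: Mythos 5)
Your proposal is correct and follows essentially the same route as the paper's proof: partition $\R^{2k}$ into at most $(kl)!$ semi-algebraic regions according to the relative order of the $kl$ vertex-pair distances $d(c_i,b_j)$, and observe via the recurrence of Lemma~\ref{FFD} that on each region the min-max collapses to a single fixed distance function, which is algebraic of degree two. Your write-up is in fact somewhat more careful than the paper's (you use all $\binom{kl}{2}$ pairwise comparisons rather than only those with $i<i'$ and $j<j'$, and you spell out why fixing the order freezes both the inner maximizer and the outer minimizing walk), but the underlying argument is the same.
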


\begin{proof}
We consider ${k\choose 2}\cdot{l\choose 2}$ manifolds in $\R^{2k}$
defined as 
$$(x(c_i)-x(b_j))^2+(y(c_i)-y(b_{j}))^2=
(x(c_{i'})-x(b_{j'}))^2+(y(c_{i'})-y(b_{j'}))^2$$
for every four integer $i,i',j,j'$ such that $1\le i<i'\le k$ and
$1\le j<j'\le l$. They partition $\R^{2k}$ into at most $(kl)!$
semi-algebraic sets $R_1,R_2,R_3,\dots$ corresponding to the order of
distances between $c_i$ and $b_{j}$ for all $1\le i\le k$ and 
$1\le j\le l$.  

Equation (\ref{FD}) in Lemma \ref{FFD} implies that the function $f(C)$ restricted
to a domain $R_m$ satisfies 
$$f(C)=\sqrt{(x(c_i)-x(b_{j}))^2+(y(c_i)-y(b_{j}))^2}$$
for some pair $i,j$.
The lemma follows.
\end{proof}

We now prove the following theorem regarding the combinatorial
upper bound for {\em VD}$_{F}({\cal C})$.

\begin{theorem} \label{FVD}
Let ${\cal C}$ be a collection of $n$ polygonal chains $C_1,\dots,C_n$
each with at most $k$ vertices in the plane. 
The combinatorial complexity of the Voronoi diagram {\em VD}$_F({\cal C})$
is $O(n^{2k+\epsilon})$, for any $\epsilon>0$. 
\end{theorem}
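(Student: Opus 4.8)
The plan is to realize $VD_F(\mathcal{C})$ as a minimization diagram and then invoke Sharir's envelope bound (Theorem~\ref{SH}). Using the polyline-point correspondence, I regard the space of query chains with $k$ vertices as $\R^{2k}$, and for each site $C_i\in\mathcal{C}$ I define the distance function $f_i:\R^{2k}\to\R$ by $f_i(C)=\dfre(C,C_i)$. The diagram $VD_F(\mathcal{C})$ is exactly the minimization diagram of $f_1,\dots,f_n$, i.e., the subdivision of $\R^{2k}$ according to which $f_i$ attains the pointwise minimum. Its combinatorial complexity equals that of the lower envelope of the graphs of the $f_i$, which are surfaces sitting in $\R^{2k}\times\R=\R^{2k+1}$. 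Thus I will apply Theorem~\ref{SH} in dimension $d=2k+1$, where the relevant surfaces are $(d-1)=2k$-dimensional.

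First I would assemble the collection $\Sigma$ of surface patches. By Lemma~\ref{surface}, each $f_i$ is algebraic when restricted to one of at most $(kl)!\le (k^2)!$ semi-algebraic cells $R_1,R_2,\dots$ of $\R^{2k}$, and on each such cell its graph is a portion of a $2k$-dimensional algebraic surface of constant maximum degree satisfying conditions (i) and (ii). Taking the graph of $f_i$ over each cell yields at most $(k^2)!$ surface patches per site, so in total $\Sigma$ contains $N=n\cdot(k^2)!$ patches in $\R^{2k+1}$. Since $k$ is fixed, $(k^2)!$ is a constant and $N=O(n)$.

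Next I would verify the hypotheses of Theorem~\ref{SH} for $\Sigma$. Conditions (i) and (ii) are delivered directly by Lemma~\ref{surface}. For condition (iii) I would argue that general position may be assumed without loss of generality: an infinitesimal perturbation of the site coordinates places the defining polynomials in general position and does not decrease the combinatorial complexity, so the bound obtained under (iii) dominates the unperturbed case. Applying Theorem~\ref{SH} with $d=2k+1$ then bounds the number of lower-envelope vertices of $\Sigma$ by $O(N^{d-1+\epsilon})=O(N^{2k+\epsilon})$. Substituting $N=n\cdot(k^2)!$ and absorbing the constant factor (after slightly enlarging $\epsilon$) gives $O(n^{2k+\epsilon})$. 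Since the total number of faces of all dimensions of such a lower envelope in fixed dimension is proportional to its number of vertices, the full complexity of $VD_F(\mathcal{C})$ is $O(n^{2k+\epsilon})$.

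The step I expect to be most delicate is the bookkeeping that lets the single bound from Theorem~\ref{SH} stand in for the complexity of the whole diagram. Two points need care: first, that breaking each $f_i$ into $(k^2)!$ algebraic pieces only inflates the patch count by a constant factor and remains compatible with the general-position requirement, since the boundaries of the cells $R_m$ must themselves be treated as part of the surface-patch data so that no extra high-degree features are introduced; and second, that passing from ``number of lower-envelope vertices'' to ``combinatorial complexity of the minimization diagram'' is legitimate in dimension $2k+1$. Both are standard for lower envelopes of well-behaved surface patches, but they are where the argument must be stated carefully rather than waved through.
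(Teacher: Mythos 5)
Your proposal is correct and follows essentially the same route as the paper: identify $VD_F(\mathcal{C})$ with the minimization diagram of the functions $f_i(C)=\dfre(C,C_i)$ on $\R^{2k}$, use Lemma~\ref{surface} to decompose each graph into a constant number of algebraic patches satisfying conditions (i) and (ii), assume general position for (iii), and apply Theorem~\ref{SH} to the lower envelope in $\R^{2k+1}$. Your version is somewhat more careful than the paper's about the bookkeeping (the $(kl)!$ patch count, the perturbation argument for (iii), and the passage from envelope vertices to total complexity), but these are elaborations of the same argument rather than a different approach.
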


\begin{proof}
Let $c_{i1},c_{i2},\dots,c_{ik}$ be the sequence of vertices of
the chain $C_i,i=1,\dots,n$ and let $(x(c_{ij}),y(c_{ij}))$ be the
coordinates of the vertex $c_{ij},j=1,\dots,k$.
Using this notation we view every polygonal chain $C$ with $k$
vertices in the plane as a point in $\R^{2k}$.
The Voronoi diagram {\em VD}$_F({\cal C})$ can be viewed as a diagram in
$\R^{2k}$. It is well-known that a Voronoi diagram can be interpreted as
a minimization of distance functions to the sites \cite{ES86,Sh94}.

Let $f_i:\R^{2k}\to\R,i=1,\dots,n,$ be the distance function defined as 
$$f_i(C)=\dfre(C,C_i),$$
where $C=(c_1,\dots,c_k)\in\R^{2k}$ is a polygonal chain with $k$
vertices and $c_i=(x(c_i),y(c_i)),i=1,\dots,k$. 
By Lemma \ref{surface} the function $f_i(C)$ satisfies the conditions
(i) and (ii). And condition (iii) is easily satisfied using an argument
similar to \cite{Sh94}. The Voronoi diagram {\em VD}$_F({\cal C})$ corresponds
to the lower envelope in the arrangement of the surfaces
$z=f_i(C)$ in $\R^{2k+1}$. By Theorem \ref{SH} the combinatorial complexity of
{\em VD}$_F({\cal C})$ is $O(n^{2k+\epsilon})$. 
\end{proof}

{\em Combinatorial lower bound} for {\em VD}$_F({\cal C})$. 
We now present a general lower bound for {\em VD}$_{F}({\cal C})$. In fact
we show a much stronger result that even a slice of {\em VD}$_F({\cal C})$
could contain a $L_\infty$-Voronoi diagram in $k$ dimensions, whose
combinatorial complexity is $\Omega(n^{\lfloor \frac{k+1}{2}\rfloor})$.

Schaudt and Drysdale proved that a $L_\infty$-Voronoi diagram in $k$
dimensions has combinatorial complexity of 
$\Omega(n^{\lfloor \frac{k+1}{2}\rfloor})$ \cite{SD92}.
Let $S=\{p_1,p_2,\dots,p_n\}$ be a set of $n$ points in $\R^k$ such
that the $L_\infty$ Voronoi diagram of $S$ has complexity of 
$\Omega(n^{\lfloor \frac{k+1}{2}\rfloor})$.
Let $M>0$ be a real number such that the hypercube $[-M,M]^k$ contains
$S$ and 
all the Voronoi vertices of the $L_\infty$-Voronoi diagram of $S$.
We consider a $k$-dimensional flat $F$ of $\R^{2k}$ defined as
$F=\{(a_1,M,a_2,2M,\dots,a_k,kM)~|~a_1,\dots,a_k\in\R\}$ and 
a projection $\pi:F\to\R^k$ defined as $\pi(b)=(b_1,b_3,\dots,b_{2k-1})$,
for $b=(b_1,b_2,b_3,...,b_{2k-1},b_{2k})$.

Let ${\cal C}=\{C_1,C_2,...,C_n\}$, each $C_i$ being a planar polygonal
chain with $k$ vertices. Let $C_i=$ $<c_{i1},c_{i2},...,c_{ik}>$ and
$c_{im}=(x(c_{im}),y(c_{im}))$, for $m=1,2,...,k$. We set
$c_{im}=(p_{im},mM)$, for 
$1\leq i\leq n,1\leq m\leq k$. Clearly, every $C_i\in F$.
With $C_i$ we associate a point $C'_i=\pi(C_i)$ in $\R^k$.
We show that the intersection of $F$ and a {\em VD}$_F({\cal C})$ 
has complexity of $\Omega(n^{\lfloor \frac{k+1}{2}\rfloor})$.

Consider a point $T\in F$ such that $T'=\pi(T)$ is a
$L_\infty$-Voronoi vertex of $S$ in $\R^k$. Then $T'\in [-M,M]^k$. 
At this point, the question is: what is the discrete Fr\'{e}chet distance
between $T$ and a chain $C_i$?  Note that $\dfre^W(T,C_i)<M$ if and
only if $W=W_0$ where $W_0=\{(t_m,c_{im})~|~m=1,\dots,k\}$. Therefore 
$\dfre(T,C_i)=\dfre^{W_0}(T,C_i)=\max\{|x(t_{1})-x(c_{i1})|,
|x(t_{2})-x(c_{i2})|,...,|x(t_{j})-x(c_{ij})|,...,|x(t_{k})-x(c_{ik})|\}$.
This is exactly the $L_\infty$-distance between $T'$ and $C'_i$, or
$d_F(T,C_i)=d_\infty(T',C'_i)$. 
Then the slice of {\em VD}$_F({\cal C})$ contains the
$L_\infty$-Voronoi diagram of $S$ in $k$ dimensions. 
We thus have the following theorem.

\begin{theorem}
The combinatorial complexity of {\em VD}$_F({\cal C})$ for a set ${\cal C}$ of
$n$ planar polygonal chains with $k$ vertices is
$\Omega(n^{\lfloor \frac{k+1}{2}\rfloor})$; in fact even a
$k$-dimensional slice of
{\em VD}$_F({\cal C})$ can have a combinatorial complexity
of $\Omega(n^{\lfloor \frac{k+1}{2}\rfloor})$.
\end{theorem}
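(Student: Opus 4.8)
The plan is to make rigorous the embedding already set up above: to exhibit, inside the $k$-dimensional flat $F\subset\R^{2k}$, a slice of {\em VD}$_F({\cal C})$ that is combinatorially a copy of an $L_\infty$-Voronoi diagram of $n$ points in $\R^k$, and then to quote the Schaudt--Drysdale lower bound \cite{SD92}. Concretely I would begin with a point set $S=\{p_1,\dots,p_n\}\subset\R^k$ whose $L_\infty$-Voronoi diagram has complexity $\Omega(n^{\lfloor \frac{k+1}{2}\rfloor})$, and --- since this complexity is invariant under a uniform scaling of $S$ --- I would rescale so that $S$ together with all of its Voronoi vertices lies well inside the cube, say inside $[-M/3,M/3]^k$, rather than merely inside $[-M,M]^k$. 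I would then form the chains $C_i$ with $c_{im}=(p_{im},mM)$, the flat $F$, and the projection $\pi$ exactly as constructed above, so that each $C_i$, viewed as a point of $\R^{2k}$, lies on $F$ with $\pi(C_i)=p_i$, and the $m$-th vertex of every chain sits at height $y=mM$.

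The heart of the argument is the claim that for every $T\in F$ with $\pi(T)\in[-M/3,M/3]^k$ one has $\dfre(T,C_i)=d_\infty(\pi(T),\pi(C_i))$ for all $i$. The key step is a \emph{snapping} observation driven by the vertical gaps: a paired walk of two $k$-vertex chains corresponds to a monotone staircase from $(1,1)$ to $(k,k)$ in the $k\times k$ alignment grid, and any such staircase other than the pure diagonal $W_0=\{(t_m,c_{im})\}_{m=1}^{k}$ must pass through an off-diagonal cell $(m,m')$ with $m\neq m'$, at which the paired vertices differ in height by $|m-m'|M\ge M$, forcing cost at least $M$. The diagonal walk $W_0$ instead pairs equal-height vertices, so its cost is $\max_m d(t_m,c_{im})=\max_m|x(t_m)-x(c_{im})|=d_\infty(\pi(T),\pi(C_i))$, which by the rescaling is at most $2M/3<M$. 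Hence $W_0$ is the unique optimal walk and the claim follows; I would record this cleanly by running the recurrence of Lemma~\ref{FFD}, whose min-of-three structure is exactly what rules out every non-diagonal alternative once they all cost at least $M$.

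With the claim established, the restriction of {\em VD}$_F({\cal C})$ to the region $\pi^{-1}([-M/3,M/3]^k)$ of $F$ coincides, under the identification $\pi$, with the $L_\infty$-Voronoi diagram of $S$: the ordering of the distances $\dfre(T,C_i)$ matches the ordering of the $d_\infty(\pi(T),p_i)$, so the nearest-site partition --- and with it every vertex, edge, and higher-dimensional face --- is transported verbatim. Since this region was chosen to contain all $\Omega(n^{\lfloor \frac{k+1}{2}\rfloor})$ features of the $L_\infty$-diagram, the $k$-dimensional slice $F\cap{\em VD}_F({\cal C})$ already has that complexity. Finally, the full diagram contains the slice, so its combinatorial complexity is at least as large, which proves both assertions of the theorem.

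I expect the only delicate point to be the snapping step --- verifying that no non-diagonal walk can ever undercut $W_0$. This is precisely where the two scales must be balanced against each other: the vertical spacing $M$ penalizes misaligned walks by at least $M$, while the horizontal spread of $S$ must be held well below $M$ so that $W_0$ stays strictly cheaper than $M$. Placing $S$ inside $[-M/3,M/3]^k$ secures both inequalities simultaneously; the remainder is routine bookkeeping with $\pi$ and a direct appeal to \cite{SD92}.
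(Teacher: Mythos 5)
Your proposal is correct and follows essentially the same route as the paper: place the $m$-th vertex of every chain at height $mM$ so that any non-diagonal paired walk incurs cost at least $M$, conclude that the diagonal walk $W_0$ is optimal and realizes $d_\infty(\pi(T),\pi(C_i))$, and then invoke the Schaudt--Drysdale lower bound on the $L_\infty$-Voronoi diagram transported through $\pi$. Your one refinement --- rescaling $S$ and its Voronoi vertices into $[-M/3,M/3]^k$ so that the diagonal walk's cost is strictly below $M$ --- is a sensible tightening of the paper's claim that $\dfre^W(T,C_i)<M$ only for $W=W_0$, which as stated in the paper would otherwise require the horizontal spread to be controlled relative to $M$.
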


We comment that this lower bound is probably not tight and some significantly
different method is needed to improve either the lower or upper bounds
(or both). Nevertheless, the above theorems show the first non-trivial bounds
for the size of {\em VD}$_F({\cal C})$.

\section{The Combinatorial Complexity of {\em VD}$_{F}({\cal C})$ in 3D}

For protein-related applications, the input are polygonal chains in 3D.
It turns out that Lemmas \ref{l1} and \ref{FFD} hold for polygonal chains
in $\R^3$. Similar to Lemma \ref{surface} we can prove 

\begin{lemma} \label{surface3}
Let $B\in\R^{3l}$ be a polygonal chain of $l$ vertices
$b_{1},\dots,b_{l}$ in $\R^3$, where\\
$b_{i}=(x(b_{i}),y(b_{i}),z(b_{i}))$. 
Let $f:\R^{3k}\to\R$ be the distance function defined as
$$f(C)=\dfre(C,B),$$
where $C=(c_1,\dots,c_k)\in\R^{3k}$ is a polygonal chain with $k$
vertices and $c_i=(x(c_i),y(c_i),z(c_i)),i=1,\dots,k$. 
The space $\R^{3k}$ can be partitioned into at most $(kl)!$
semi-algebraic sets $R_1,R_2,R_3,\dots$ such that the function 
$f(C)$ with domain restricted to any $R_i$ is algebraic.
Thus, the function $f(C)$ satisfies the above condition (i) and (ii). 
\end{lemma}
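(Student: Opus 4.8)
The plan is to mirror the proof of Lemma \ref{surface} almost verbatim, changing only the ambient dimension from $2k$ to $3k$ and adding the $z$-coordinates to every squared-distance expression. The statement is the $\R^3$ analogue of Lemma \ref{surface}, and since Lemma \ref{FFD} (the recursive formula for $\dfre$) is explicitly asserted to hold in $\R^3$ as well, the entire argument transfers with the obvious modifications. First I would introduce the family of ${k\choose 2}\cdot{l\choose 2}$ manifolds in $\R^{3k}$ that equate two squared Euclidean distances, namely
\begin{equation*}
(x(c_i)-x(b_j))^2+(y(c_i)-y(b_j))^2+(z(c_i)-z(b_j))^2
=(x(c_{i'})-x(b_{j'}))^2+(y(c_{i'})-y(b_{j'}))^2+(z(c_{i'})-z(b_{j'}))^2,
\end{equation*}
ranging over all $1\le i<i'\le k$ and $1\le j<j'\le l$.

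Next I would argue that these manifolds partition $\R^{3k}$ into at most $(kl)!$ semi-algebraic cells $R_1,R_2,\dots$, each corresponding to a fixed linear order of the $kl$ pairwise distances $d(c_i,b_j)$. The bound $(kl)!$ is just the number of orderings of $kl$ quantities, and each cell is semi-algebraic because it is carved out by polynomial inequalities of constant degree. On a single cell $R_m$ the ordering of all the $d(c_i,b_j)$ is fixed, so every $\min$ and $\max$ appearing in the recursion (\ref{FD}) of Lemma \ref{FFD} resolves to a specific coordinate choice; unwinding the recursion therefore pins $f(C)=\dfre(C,B)$ to a single radical $\sqrt{(x(c_i)-x(b_j))^2+(y(c_i)-y(b_j))^2+(z(c_i)-z(b_j))^2}$ for some fixed pair $(i,j)$ throughout $R_m$. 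This is an algebraic function of $C$ on $R_m$, which establishes the monotonicity and constant-degree requirements of conditions (i) and (ii).

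The only genuinely new point relative to the planar case is verifying that adding the third coordinate does not break the monotonicity clause of condition (i): the graph $z=f(C)$ must still be monotone in the $x_1\dots x_{3k}$-directions, i.e. a line parallel to the last axis of $\R^{3k+1}$ meets it once. Since $f$ restricted to $R_m$ is a single square-root of a positive-definite quadratic, its graph is a portion of an algebraic surface of bounded degree and is a genuine function of $C$, so this holds exactly as before. I do not expect any real obstacle here — the argument is deliberately a routine lift of the $2k$-dimensional proof — and the degree bound $b$ in condition (i) is the same constant since each defining polynomial still has degree at most two before the radical. Hence $f(C)$ satisfies conditions (i) and (ii) in $\R^{3k}$, and the lemma follows.
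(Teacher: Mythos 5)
Your proposal matches the paper's own treatment: the paper proves this lemma by declaring it ``similar to Lemma \ref{surface}'' with the only change being that the distance-equality manifolds in $\R^{3k}$ acquire the extra $(z(c_i)-z(b_j))^2$ terms, which is precisely the routine lift you carry out. The $(kl)!$ cell count, the resolution of the recursion in Lemma \ref{FFD} to a single radical on each cell, and the verification of conditions (i) and (ii) are all handled exactly as you describe.
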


The main difference is that the manifolds in $\R^{3k}$ are defined as 
$$(x(c_i)-x(b_j))^2+(y(c_i)-y(b_{j}))^2+(z(c_i)-z(b_{j}))^2=
(x(c_{i'})-x(b_{j'}))^2+(y(c_{i'})-y(b_{j'}))^2+(z(c_{i'})-z(b_{j'}))^2.$$

Using Lemma \ref{surface3} we can prove similar to Theorem \ref{FVD} the
following bound.

\begin{theorem} \label{FVD3}
Let ${\cal C}$ be a collection of $n$ polygonal chains $C_1,\dots,C_n$
each with at most $k$ vertices in $\R^3$. 
The combinatorial complexity of the Voronoi diagram {\em VD}$_F({\cal C})$
is $O(n^{3k+\epsilon})$, for any $\epsilon>0$. 
\end{theorem}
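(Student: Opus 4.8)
The plan is to follow the proof of Theorem~\ref{FVD} essentially verbatim, the only change being that the ambient dimension rises from $2k$ to $3k$ to accommodate the third coordinate of each vertex in $\R^3$. First I would set up the polyline-point correspondence in $\R^3$: writing the vertices of chain $C_i$ as $c_{i1},\dots,c_{ik}$ with $c_{ij}=(x(c_{ij}),y(c_{ij}),z(c_{ij}))$, I view each $k$-vertex chain as a point of $\R^{3k}$ and regard {\em VD}$_F({\cal C})$ as a subdivision of this space. As in the planar case, the diagram is the minimization diagram of the $n$ distance functions $f_i:\R^{3k}\to\R$, $f_i(C)=\dfre(C,C_i)$, so the combinatorial complexity I must bound is that of the lower envelope of the graphs $z=f_i(C)$ sitting in $\R^{3k+1}$.

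Next I would invoke Lemma~\ref{surface3} to verify Sharir's conditions (i) and (ii). The lemma partitions $\R^{3k}$ into at most $(kl)!$ semi-algebraic cells, on each of which $f_i$ agrees with a single Euclidean distance $\sqrt{(x(c_i)-x(b_j))^2+(y(c_i)-y(b_j))^2+(z(c_i)-z(b_j))^2}$ between a fixed pair of vertices; hence on each cell the graph $z=f_i$ is a single algebraic branch of constant degree, monotone in the $z$-direction, and with a semi-algebraic projection onto $z=0$. Since $l\le k$, the number of cells is a constant independent of $n$, so each $f_i$ contributes only $O(1)$ surface patches and the full collection $\Sigma$ has $N=O(n)$ patches. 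Condition (iii) (general position) is arranged by the same symbolic perturbation of the chains' coordinates used in \cite{Sh94}, which affects only lower-order terms and leaves the asymptotics unchanged.

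Finally I would apply Theorem~\ref{SH} with $d=3k+1$: the patches are $(3k)$-dimensional algebraic surfaces in $(3k+1)$-space satisfying (i)--(iii), so the number of vertices on the lower envelope is $O(N^{(3k+1)-1+\epsilon})=O(n^{3k+\epsilon})$. This bounds the complexity of the minimization diagram, and therefore of {\em VD}$_F({\cal C})$, giving the claimed $O(n^{3k+\epsilon})$.

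I expect no genuine obstacle here, since every ingredient---the recurrence of Lemma~\ref{FFD}, the cell decomposition of Lemma~\ref{surface3}, and Sharir's envelope bound of Theorem~\ref{SH}---transfers from the plane to $\R^3$ without structural change. The only point that merits care is confirming that the defining manifolds, now of the form $(x(c_i)-x(b_j))^2+(y(c_i)-y(b_j))^2+(z(c_i)-z(b_j))^2=(x(c_{i'})-x(b_{j'}))^2+(y(c_{i'})-y(b_{j'}))^2+(z(c_{i'})-z(b_{j'}))^2$, still cut $\R^{3k}$ into $O((kl)!)$ cells on each of which $f_i$ is a single algebraic branch; both facts are exactly what Lemma~\ref{surface3} supplies.
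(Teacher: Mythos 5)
Your proposal is correct and matches the paper's approach exactly: the paper itself only states that Theorem~\ref{FVD3} follows from Lemma~\ref{surface3} ``similar to Theorem~\ref{FVD},'' and your write-up is precisely that argument carried out --- lift chains to $\R^{3k}$, verify Sharir's conditions via the $(kl)!$ semi-algebraic cells, and apply Theorem~\ref{SH} in $\R^{3k+1}$ to get $O(n^{3k+\epsilon})$. In fact your version is more detailed than the paper's own one-line justification.
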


We comment that the lower bound in Theorem 3.3 is in fact a special case
for {\em VD}$_F({\cal C})$ in $\R^3$.

\section{Lower Bound for a Simple Degenerate Case}

In this section, we present a lower bound for a simple degenerate case.
Under this degenerate case, the geometric bisectors are easy to visualize.
Nevertheless, we will see that Voronoi diagram under the discrete
(and continuous) Fr\'{e}chet distance is much more complex compared with the
corresponding Euclidean case. In fact, for the same construction the former
Voronoi diagram has a combinatorial complexity of $\Omega(n^2)$ while the
latter Voronoi diagram (under the Euclidean metric) is only of size $\Theta(n)$.

First, let all the polygonal chains be 2D line segments, i.e.,
${\cal C}=\{C_i|C_{i}=$ $<a_{i},b_{i}>, a_i=(a_{i1},a_{i2}), b_i=(b_{i1},b_{i2})$,
for $1\leq i \leq n\}$. Then we set $b_{i2}=0$ for all $i$. As a matter of fact,
${\cal C}$ is equivalent to a set $P$ of $n$ points in 3D, i.e.,
$P=\{(a_{i1},a_{i2},b_{i1})|1\leq i\leq n\}$.

We now further construct points in $P$ in two equal parts, i.e.,
$P=P_1\cup Q_1$, with $P_1=\{p_i=(0,0,i)|1\leq i\leq n/2 \}$ and
$Q_1=\{(j,0,0)|1\leq j\leq n/2 \}$ (without loss of generality, assume that $n$ is even).

Let us look at the bisector between $p_i$ and $p_{i+1}$. Be reminded that
$p_i$ and $p_{i+1}$ are in fact line segments; i.e., $p_i=$ $<(0,0),(i,0)>$ and
$p_{i+1}=$ $<(0,0), (i+1,0)>$. Let $s=(x,y,z,0)$ be a point on the bisector of
$p_i$ and $p_{i+1}$ in 3D. Then the discrete Fr\'{e}chet distance
$d_F(p_i,s)$ is $d_F(p_i,s)=\max\{|z-i|,\sqrt{x^2+y^2}\}$. Similarly,
$d_F(p_{i+1},s)=\max\{|z-i-1|,\sqrt{x^2+y^2}\}$. 

So, what does this bisector look like? When $|z-i|$ and $|z-i-1|$ are bigger
than $\sqrt{x^2+y^2}$, then the bisector is a plane $z=\frac{2i+1}{2}$; when
$\sqrt{x^2+y^2}$ is bigger the bisector are two paraboloids $(z-i)^2=x^2+y^2$ and
$(z-i-1)^2=x^2+y^2$. The intersection of the plane $z=\frac{2i+1}{2}$ and
these two paraboloids is, of course, a circle $x^2+y^2=(\frac{1}{2})^2$ (on the
plane $z=\frac{2i+1}{2}$). See Fig.~2 (part I).

\begin{figure}[hbt]
\centerline{\epsffile{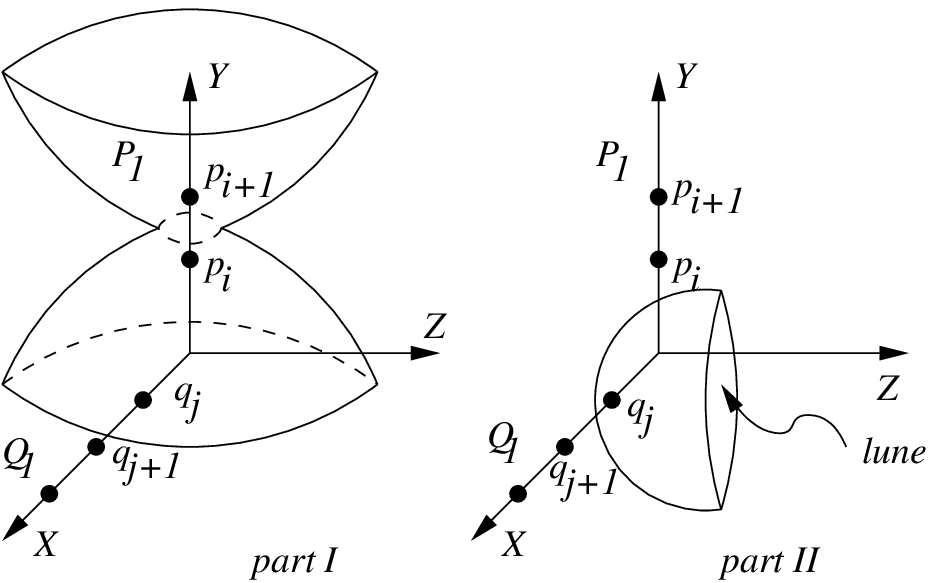}}
\begin{center}{\small {\bf Fig.~2}. Bisectors for $p_i,p_{i+1}$ and for $q_j,q_{j+1}$.}
\end{center}
\end{figure}

Next let us look at the bisector between $q_j$ and $q_{j+1}$. Again, be
reminded that $q_j$ and $q_{j+1}$ are in fact line segments; i.e., $q_j=$ $<(j,0),(0,0)>$ and
$q_{j+1}=$ $<(j+1,0),(0,0)>$. For a point $s=(x,y,z)$ on the bisector of
$q_j$ and $q_{j+1}$ in 3D, we have the discrete Fr\'{e}chet distance
$d_F(q_{j},s)=\max\{|z|,\sqrt{(x-j)^2+y^2}\}$ and
$d_F(q_{j+1},s)=\max\{|z|,\sqrt{(x-j-1)^2+y^2}\}$.

In this case, the bisector is even more interesting (and complex): When $|z|$ is
bigger, the bisector contains all the points satisfying 
$z^2 \geq (x-j)^2+y^2$ and $z^2 \geq (x-j-1)^2+y^2$ and this is
not even a manifold. Geometrically, this is the intersection of the two
paraboloids $z^2 \geq (x-j)^2+y^2$ and $z^2 \geq (x-j-1)^2+y^2$ and on
each vertical slice $z=c$, for some constant $c$, this is a lune determined by
two disks $(x-j)^2+y^2 \leq c^2$ and $(x-j-1)^2+y^2 \leq c^2$. Of course,
when $|z|$ is smaller, the bisector is exactly the plane $x=\frac{2j+1}{2}$.
See Fig.~2 (part II).

So how do we determine the combinatorial complexity of
{\em VD}$_{F}(P_1\cup Q_1)$? In this case, the Voronoi vertex is not a
geometric point. We prove below that we can take the first intersection
of the bisector of $p_i,p_{i+1}$ and the bisector of $q_j,q_{j+1}$, which is a
geometric point, and then show that this intersection point is farther to
some points in $P_1-\{p_i,p_{i+1}\}$ and $Q_1-\{q_j,q_{j+1}\}$ than to
$p_i,p_{i+1},q_j,q_{j+1}$ and it is never closer to any point
in $P_1-\{p_i,p_{i+1}\}$ and $Q_1-\{q_j,q_{j+1}\}$ than to
$p_i,p_{i+1},q_j,q_{j+1}$. Then this point contributes an $\Omega(1)$ cost to
the combinatorial complexity of {\em VD}$_{F}(P_1\cup Q_1)$. Finally we show
that there are $\Omega(n)$ such $p_i,p_{i+1}$ pairs and $\Omega(n)$ such
$q_j,q_{j+1}$ pairs. And this will conclude the proof that {\em VD}$_F(P)$
has a combinatorial complexity of $\Omega(n^2)$, or, put it in another way,
given $n$ co-planar points in 3D the Voronoi diagram under the discrete
Fr\'{e}chet distance has a combinatorial complexity of $\Omega(n^2)$. This
is significantly different from the corresponding Euclidean Voronoi diagram
of $n$ co-planar points, has a combinatorial complexity of $\Theta(n)$
\cite{PS85}.
 
Now let us finish the technical details. The first intersection
of the bisector of $p_i,p_{i+1}$ and the bisector of $q_j,q_{j+1}$ is determined
by $z^2 = (x-j-1)^2+y^2$ and $(z-i)^2=x^2+y^2$. Therefore, we have the
intersection point $t_{ij}=(\frac{j+1}{2},\frac{\sqrt{i^2-(j+1)^2}}{2},\frac{i}{2})$. 
Clearly, if two pairs $(i,j)$ and $(i',j')$ are different then
$t_{ij}\neq t_{i'j'}$.
We now show that $t_{ij}$ is never closer to points in $P_1-\{p_i,p_{i+1}\}$ and $Q_1-\{q_j,q_{j+1}\}$ than to
$p_i,p_{i+1},q_j,q_{j+1}$
(and $t_{ij}$ is farther to some points in $P_1-\{p_i,p_{i+1}\}$ and $Q_1-\{q_j,q_{j+1}\}$ than to
$p_i,p_{i+1},q_j,q_{j+1}$). This can be done by showing $d_F(t_{ij},p_h)>\frac{i}{2}$ for $h>i$ and
$d_F(t_{ij},p_h)=\frac{i}{2}$ for $h<i$;
and $d_F(t_{ij},q_m)>\frac{i}{2}$ for $m>j+1$ and
$d_F(t_{ij},q_k)=\frac{i}{2}$ for $m<j+1$. We have
$$d_F(t_{ij},p_h)=\max\{\sqrt{(\frac{j+1}{2})^2+\frac{i^2-(j+1)^2}{4}},|\frac{i}{2}-h|\},$$
which is $d_F(t_{ij},p_h)=\max\{\frac{i}{2},|\frac{i}{2}-h|\}>\frac{i}{2}$, when
$h>i$ and
$d_F(t_{ij},p_h)=\max\{\frac{i}{2},|\frac{i}{2}-h|\}=\frac{i}{2}$, when
$h<i$.
Similarly,
$$ d_F(t_{ij},q_m)=\max\{\sqrt{(\frac{j+1}{2}-m)^2+\frac{i^2-(j+1)^2}{4}},|\frac{i}{2}|\},$$
which is $d_F(t_{ij},q_m)=\max\{\frac{i^2-4m(j+1)+4m^2}{2},\frac{i}{2}\}>\frac{i}{2}$, when $m>j+1$ and\\
$d_F(t_{ij},q_m)=\max\{\frac{i^2-4m(j+1)+4m^2}{2},\frac{i}{2}\}>\frac{i}{2}$, when $m<j+1$.
Be reminded that if two pairs $(i,j)$ and $(i',j')$ are different then
$t_{ij}\neq t_{i'j'}$.
Consequently, each bisector of $p_i,p_{i+1}$ and each bisector of $q_j,q_{j+1}$
determines a distinct Voronoi point (on a Voronoi vertex) which is never closer
to points in $P_1-\{p_i,p_{i+1}\}$ and $Q_1-\{q_j,q_{j+1}\}$ than to
$p_i,p_{i+1},q_j,q_{j+1}$ and
is farther to some points in
$P_1-\{p_i,p_{i+1}\}$ and $Q_1-\{q_j,q_{j+1}\}$ than to
$p_i,p_{i+1},q_j,q_{j+1}$. We thus have the following theorem.

\begin{theorem}
The combinatorial complexity of {\em VD}$_F(P)$ for a set $P$ of $n$ degenerate
planar line segments, which is a set of $n$ co-planar points in 3D, is
$\Omega(n^2)$.
\end{theorem}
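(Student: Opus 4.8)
The plan is to close the $\Omega(n^2)$ bound with a counting argument built on the explicit points $t_{ij}$ constructed above; the geometry is already in place, so only the enumeration and a distinctness check remain. For a pair $p_i,p_{i+1}$ from $P_1$ and a pair $q_j,q_{j+1}$ from $Q_1$, the analysis above produces the single geometric point $t_{ij}=(\frac{j+1}{2},\frac{\sqrt{i^2-(j+1)^2}}{2},\frac{i}{2})$ on the intersection of the two bisectors, and it verifies the distance pattern $d_F(t_{ij},p_h)=\frac{i}{2}$ for $h\le i$ with $d_F(t_{ij},p_h)>\frac{i}{2}$ for $h>i$, and $d_F(t_{ij},q_m)=\frac{i}{2}$ for $m\le j+1$ with $d_F(t_{ij},q_m)>\frac{i}{2}$ for $m>j+1$. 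Hence the nearest-site distance from $t_{ij}$ to $P_1\cup Q_1$ is exactly $\frac{i}{2}$, it is realized by $p_i,q_j,q_{j+1}$ (among others) while every remaining site is strictly farther, so $t_{ij}$ lies on {\em VD}$_F(P)$ at a point where several Voronoi cells meet and contributes an $\Omega(1)$ cost to the complexity.

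First I would record the admissible index range. A pair $p_i,p_{i+1}$ requires $1\le i\le n/2-1$ and a pair $q_j,q_{j+1}$ requires $1\le j\le n/2-1$; in addition the $y$-coordinate $\frac{\sqrt{i^2-(j+1)^2}}{2}$ must be real, which forces $i\ge j+1$. Counting the pairs with $j+1\le i\le n/2-1$ gives, for each $i$, the indices $j=1,\dots,i-1$, so the total is $\sum_{i=2}^{n/2-1}(i-1)=\Theta(n^2)$. Next I would establish distinctness: reading $i$ off the third coordinate $\frac{i}{2}$ and $j$ off the first coordinate $\frac{j+1}{2}$ shows that $(i,j)\mapsto t_{ij}$ is injective, so different admissible pairs give different points of {\em VD}$_F(P)$. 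Summing the $\Omega(1)$ contributions over these $\Theta(n^2)$ distinct points yields the claimed $\Omega(n^2)$ lower bound, and since $P$ is a set of $n$ co-planar points in 3D (equivalently $n$ degenerate planar segments), the theorem follows.

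The hard part is not the arithmetic but certifying that each $t_{ij}$ is a genuine $\Omega(1)$ feature of the diagram rather than an artifact of the degenerate placement. Because the sites are collinear ($P_1$ on one axis, $Q_1$ on another), the point $t_{ij}$ is in fact equidistant, at the common value $\frac{i}{2}$, to all of $p_1,\dots,p_i$ and $q_1,\dots,q_{j+1}$ simultaneously, so its local Voronoi picture is highly non-generic. I would have to verify that $t_{ij}$ sits at the transversal intersection of the crease of $d_F(\cdot,p_i)$ (the cone $|z-i|=\sqrt{x^2+y^2}$) and the crease of $d_F(\cdot,q_{j+1})$ (the cone $|z|=\sqrt{(x-j-1)^2+y^2}$), that both creases lie on the lower envelope of the distance functions there, and that this corner is a true vertex of the diagram rather than an interior point of some Voronoi facet or edge. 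Dispatching this degeneracy carefully—possibly via an infinitesimal perturbation of the sites that turns each coincidence into an honest vertex while only increasing the complexity—is where essentially all the effort would go.
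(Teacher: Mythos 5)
Your proposal follows essentially the same route as the paper's own proof: the same intersection points $t_{ij}=(\frac{j+1}{2},\frac{\sqrt{i^2-(j+1)^2}}{2},\frac{i}{2})$, the same verification that $t_{ij}$ is equidistant to $p_i,p_{i+1},q_j,q_{j+1}$ and never closer to (and sometimes strictly farther from) the remaining sites, and the same distinctness-plus-counting conclusion. The two refinements you add --- restricting to the admissible range $i\ge j+1$ so that the square root is real (which the paper omits, though the count is still $\Theta(n^2)$), and flagging that the highly degenerate placement leaves the ``each $t_{ij}$ is a genuine $\Omega(1)$ feature'' step informal --- are both points the paper itself glosses over, so your version is if anything slightly more careful than the original.
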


Since the continuous and the discrete Fr\'{e}chet distance are identical
for line segments, we also have the following corollary.

\begin{corollary}
The combinatorial complexity of the Voronoi diagram for a set $P$ of $n$
degenerate planar line segments (which is a set of $n$ co-planar points in 3D)
under the continuous Fr\'{e}chet distance measure is $\Omega(n^2)$.
\end{corollary}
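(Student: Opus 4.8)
The plan is to deduce this corollary directly from Theorem 5.1 by arguing that, on the family of inputs used in the $\Omega(n^2)$ construction, the continuous \frechet\ distance and the discrete \frechet\ distance are literally the same function, so the two Voronoi diagrams coincide as distance-minimization diagrams and therefore share the same combinatorial complexity. Concretely, every site $C_i$ in Section 5 is a line segment (a $2$-vertex chain), and every query object $s=(x,y,z,0)$ also represents a line segment; so it suffices to establish that for any two line segments the continuous and discrete \frechet\ distances agree pointwise over the whole domain $\R^4$ of the diagram.

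To do this I would first recall from Lemma \ref{l1} that the discrete value is $\dfre(A_2,B_2)=\max(d(a_1,b_1),d(a_2,b_2))$, and then show the continuous value $\delta_{\mathcal{F}}(A_2,B_2)$ equals the same quantity. For the upper bound, use the affine (constant-speed) parametrizations $f(t)=a_1+t(a_2-a_1)$ and $g(t)=b_1+t(b_2-b_1)$, $t\in[0,1]$: the map $t\mapsto d(f(t),g(t))=\|(a_1-b_1)+t((a_2-a_1)-(b_2-b_1))\|$ is convex, hence attains its maximum at an endpoint, so this particular pairing already has cost $\max(d(a_1,b_1),d(a_2,b_2))$. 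For the matching lower bound, observe that every admissible pair of reparametrizations $\alpha,\beta$ satisfies $\alpha(0)=\beta(0)=0$ and $\alpha(1)=\beta(1)=1$, so the leash necessarily connects $a_1$ to $b_1$ at the start and $a_2$ to $b_2$ at the end; thus the maximum leash length over the walk is at least $\max(d(a_1,b_1),d(a_2,b_2))$. Combining the two bounds gives $\delta_{\mathcal{F}}(A_2,B_2)=\max(d(a_1,b_1),d(a_2,b_2))=\dfre(A_2,B_2)$.

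Finally I would transfer the conclusion. Since every site and every point of the ambient space in the construction of Section 5 is a line segment, the previous paragraph shows that the continuous \frechet\ distance from a query to each site agrees identically with the discrete \frechet\ distance used in the proof of Theorem 5.1. Therefore the lower envelope of the distance functions --- and in particular the $\Omega(n^2)$ family of distinct Voronoi vertices $t_{ij}$ together with the separation inequalities established there --- is unchanged when ``discrete'' is replaced by ``continuous''. Hence the continuous-\frechet\ Voronoi diagram of $P$ also has combinatorial complexity $\Omega(n^2)$. I expect no genuine obstacle beyond the segment identity just verified: the only point requiring care is confirming that the equality of the two distances holds at \emph{every} point of the $4$-dimensional domain (not merely at the sites), which is immediate because each such point encodes a line segment, so no part of the argument of Theorem 5.1 needs to be re-derived.
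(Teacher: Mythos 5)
Your proposal is correct and follows exactly the paper's route: the paper derives the corollary from Theorem 5.1 via the single observation that the continuous and discrete Fr\'echet distances coincide for line segments, which is precisely the identity you establish (the paper simply asserts it without the convexity/endpoint argument you supply). No issues.
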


\section{Concluding Remarks}

In this paper, for the first time, we study the Voronoi diagram
of polylines in 2D and 3D under the discrete \frechet\ distance. We show
combinatorial upper and lower bounds for such a Voronoi diagram.
At this point, to make the diagram useful, we need to present
efficient (approximation) algorithms to construct such a Voronoi diagram for
decent $k$ (say $k=10\sim20$) so that one can first use a $(k-1)$-link chain
to approximate a general input polyline. Although the running time might
still look too high (due to the $O(k)$ exponent in the running time), the
good news is that in many applications like protein structural alignment,
$n$ is not very large. Another open problem, obviously, is to improve
the combinatorial bounds shown in this paper.

\section*{Acknowledgment}
We thank Minghui Jiang for pointing out the gap in an earlier version of
the arguments for Theorem 3.3.

\end{document}